\title{Classical and Intuitionistic Subexponential Logics\\are Equally Expressive}
\titlerunning{CSL and ISL are equally expressive}
\author{Kaustuv Chaudhuri}
\institute{
  INRIA Saclay, France\\
  \email{kaustuv.chaudhuri@inria.fr}
}
\authorrunning{K. Chaudhuri}
\begin{document}

\maketitle

\pagestyle{plain}
\thispagestyle{plain}

\begin{abstract}
  It is standard to regard the intuitionistic restriction of a classical logic as increasing the
  expressivity of the logic because the classical logic can be adequately represented in the
  intuitionistic logic by double-negation, while the other direction has no truth-preserving
  propositional encodings. We show here that subexponential logic, which is a family of
  substructural refinements of classical logic, each parametric over a preorder over the
  subexponential connectives, does not suffer from this asymmetry if the preorder is systematically
  modified as part of the encoding. Precisely, we show a bijection between synthetic (i.e., focused)
  partial sequent derivations modulo a given encoding. Particular instances of our encoding for
  particular subexponential preorders give rise to both known and novel adequacy theorems for
  substructural logics.
\end{abstract}

\section{Introduction}
\label{sec:intro}

In~\cite{miller10bcs-my}, Miller writes:
\begin{quote}
  \itshape ``While there is some recognition that logic is a unifying and universal discipline
  underlying computer science, it is far more accurate to say that its universal character has been
  badly fractured \dots one wonders if there is any sense to insisting that there is a core notion
  of `logic'.''
\end{quote}
Possibly the oldest such split is along the classical/intuitionistic seam, and each side can be seen
as more universal than the other. Classical logics, the domain of traditional mathematics, generally
have an elegant symmetry in the connectives that can often be exploited to create sophisticated
proof search and model checking algorithms. On the other hand, intuitionistic logics, which
introduce an asymmetry between multiple hypotheses and single conclusions, can express the
computational notion of \emph{function} directly, making it the preferred choice for programming
languages and logical frameworks. Can the rift between these two sides be bridged?

Miller proposes one approach: to use structural proof theory, particularly the proof theory of
focused sequent calculi, as a unifying language for logical formalisms. There is an important proof
theoretic difference between a given classical logic and its \emph{intuitionistic restriction} (see
\defnref{int-res}): the classical formulas can be encoded using the intuitionistic connectives in
such a way that classical provability is preserved, \ie, a formula is classically provable if and
only if its encoding is intuitionistically provable. In the other direction, however, there are no
such general encodings. The classical logic will either have to be extended (for example, with terms
and quantifiers) or refined with substructural or modal operators. For this reason, intuitionistic
logics are sometimes considered to be \emph{more expressive} than their classical counterparts.

In this paper, we compare logical calculi for ``universality'' using the specific technical
apparatus of \emph{adequate propositional encodings}. That is, given a formula in a source logic
$O$, we must be able to encode it in a target logic $M$ that must preserve the atomic predicates and
must reuse the reasoning principles of $M$, particularly its notion of provability. An example of
such an encoding would be ordinary classical logic encoded in ordinary intuitionistic logic where
each classical formula $A$ is encoded as the intuitionistic formula $\lnot\lnot A$. We can go
further and also reuse the proofs of the target calculus; in fact, there are at least the following
\emph{levels} of adequacy:

\begin{defn}[levels of adequacy]
  An encoding of formulas (equiv. of sequents) from a source to a target calculus is
  \begin{itemize}
  \item \emph{globally adequate} if a formula is true (equiv. a sequent is derivable) in the source
    calculus iff its encoding is true (equiv. the encoding of the sequent is derivable) in the
    target calculus;
  \item \emph{adequate} if the proofs of a formula (equiv. a sequent) in the source calculus are in
    bijection with the proofs of the encoding of the formula (equiv. the sequent) in the target
    calculus; and
  \item \emph{locally adequate} if open derivations (\ie, partial proofs with possibly unproved
    premises) of a formula (equiv. a sequent) in the source calculus are in bijection with the open
    derivations of the formula (equiv. the sequent) in the target calculus.
  \end{itemize}
\end{defn}

Local adequacy is an ideal for encodings because it is a strong justification for seeing the target
calculus as more universal: (partial) proofs in the source calculus can be recovered at any level of
detail. However, it is unachievable except in trivial situations. Indeed, even adequacy is often
difficult; for instance, the linear formula $\llbang a \llimp \llbang b \llimp \llbang a$ has three
sequent proofs, differing in the order in which the second $\llimp$ and the two $\llbang$s are
introduced, but there is only a single sequent proof of $a \limp b \limp a$.

It is nevertheless possible to define a kind of local adequacy that is more flexible: adequacy up to
permutations of inference rules entirely inside one of the phases of \emph{focusing}. A focused
proof~\cite{andreoli92jlc} is a proof that makes large \emph{synthetic} rules that are maximal
chains of positive or negative inference rules. An inference rule is positive, sometimes called
synchronous, if it involves an essential choice, while it is negative or asynchronous if the choices
it presents (if any) are inessential. The term ``focus'' describes the way positive inferences are
chained to form synthetic steps: each inference is applied (read from conclusion to premises) to a
single formula \emph{under focus}, and the operands of this connective remain under focus in the
premises.
\begin{defn}[focal adequacy] \label{defn:focadq}
  An encoding of sequents from a source to a target focused calculus is \emph{focally adequate} if
  they have the same synthetic inference rules.
\end{defn}

Since focusing abstracts away the inessential permutations of inference rules, a focally adequate
encoding can be used to compare logics for ``essential universality''. Surprisingly, there are very
few known focal adequacy results (see \cite{chaudhuri06phd,liang09tcs} for practically all such
known results). This paper fills in many of the gaps for existing (substructural) logics by proving
a pair of general encodings (see theorems~\ref{thm:csl-isl} and \ref{thm:isl-csl}) about
\emph{subexponential} logics~\cite{danos93kgc-my,nigam09ppdp-my}. It is well known that the
exponentials of linear logic are non-canonical. If a pre-order is imposed upon them with suitable
conditions, then the resulting logic is well-behaved, satisfying identity, admitting cuts, and
allowing focusing. Moreover, classical, intuitionistic, and linear logics can be seen as
\emph{instances} of subexponential logic for particular collections of subexponentials. Our
encodings are \emph{generic}, parametric on the \emph{subexponential signature} of the source and
target logics. As particular instances, we obtain focal adequacy results for: classical logic (CL)
in intuitionistic logic (IL), IL in classical linear logic (CLL), CLL in intuitionistic linear logic
(ILL), and an indefinite bidirectional chain between classical and intuitionistic subexponential
logics, all of which are novel. Moreover, our encodings show that any analysis (such as
cut-elimination) or algorithm (such as proof search) that is generic on the subexponential signature
cannot (and \emph{need not}) distinguish between classical and intuitionistic logics.

The rest of this paper is organized as follows: in \secref{csl} classical subexponential logic is
introduced, together with its focused sequent calculus and well known instances; in \secref{isl} its
intuitionistic restriction is presented; then in \secref{encodings} the bidirectional encoding
between classical and intuitionistic subexponential logic is constructed. Details omitted here for
space reasons can be found in the accompanying technical report~\cite{chaudhuri10tr}.

\section{Classical subexponential logic}
\label{sec:csl}

Subexponential logic borrows most of its syntax from linear logic~\cite{girard87tcs}. As we are
comparing focused systems, we adopt a polarised syntax from the beginning. Polarised formulas will
have exactly one of two polarities: \emph{positive} ($P, Q, \ldots$) constructed out of the positive
atoms and connectives, and \emph{negative} ($N, M, \ldots$) constructed out of the negative atoms
and connectives. These two classes of formulas are mutually recursive, mediated by the indexed
subexponential operators $\llbang_z$ and $\llqmark_z$.

\begin{notn}[syntax]
  \emph{Positive formulas} ($P, Q$) and \emph{negative formulas} ($N,
  M$) have the following grammar:
  \begin{kcalign}
    P, Q &\Coloneqq p \gor P \lltens Q \gor \llone \gor P \llplus Q \gor \llzero \gor \llbang_z \pat N \tag*{(positive)} \\
    N, M &\Coloneqq n \gor N \llwith M \gor \lltop \gor N \llpar M
    \gor \llbot \gor P \llimp N \gor \llqmark_z \nat P \tag*{(negative)}
  \end{kcalign}
  Atomic formulas are written in lower case ($a, b, \ldots$), with $p$ and $q$ reserved for positive
  and $n$ and $m$ reserved for negative atomic formulas. $\nat P$ denotes either a positive formula
  or a negative atom, and likewise $\pat N$ denotes a negative formula or a positive atom. We write
  $A, B, \ldots$ for any arbitrary formula (positive or negative).
\end{notn}
Because we will eventually consider its intuitionistic restriction, we retain implication $\llimp$
as a primitive even though it is classically definable. However, we exclude the non-linear
implication ($\limp$) because the unrestricted zones are non-canonical; \ie, there are many such
implications, each defined using a suitable subexponential (or compositions thereof). The subscript
$z$ in exponential connectives denotes zones drawn from a \emph{subexponential signature} (using the
terminology of~\cite{nigam09ppdp-my}).

\begin{defn}
  A \emph{subexponential signature} $\Sigma$ is a structure $\langle
  Z, \le, \lin, U \rangle$ where:
  \begin{itemize}
  \item $\langle Z, \le \rangle$ is a non-empty pre-ordered set (the ``zones'');
  \item $\lin \in Z$ is a ``\emph{working}'' zone;
  \item $U \subseteq Z$ is a set of \emph{unrestricted} zones that is
    $\le$-closed, \ie, for every $z_1, z_2 \in Z$, if $z_1 \le z_2$,
    then $z_1 \in U$ implies $z_2 \in U$. $Z \setminus U$ will be
    called the \emph{restricted} zones.
  \end{itemize}
  We use $u, v, w$ to denote unrestricted zones and $r, s, t$ to
  denote restricted zones.
\end{defn}

Unrestricted zones admit both weakening and contraction, while restricted zones are linear. The
logic is parametric on the signature. (Particular mentions of the signature will be omitted unless
necessary to disambiguate, in which case they will be written in a subscript.)
We use use a two-sided sequent calculus formulation of the logic in order to avoid appeals to De
Morgan duality. This will not only simplify the definition of the intuitionistic restriction
(\secref{isl}), but will also be crucial to the main adequacy result. Formulas in contexts are
annotated with their subexponential zones as follows: $\zat{z:A}$ will stand for $A$ occurring in
zone denoted by $z$, and $\zat{z:(A_1, \ldots, A_k)}$ for $\zat{z:A_1}, \ldots, \zat{z:A_k}$.
Sequents are of the following kinds:

\smallskip
\begin{tabular}{r@{\ $\vdash$\ }l@{\qquad}l}
  $\zl$ & $\foc{P} \zs \zr$ & right focus on $P$ \\
  $\zl \zs \foc{N}$ & $\zr$ & left focus on $N$ \\
  $\zl \zs \zla$ & $\zra \zs \zr$ & active on $\zla$ and $\zra$
\end{tabular}
\smallskip

\noindent
The contexts in these sequents have the following restrictions:
\begin{itemize}
\item All elements of the \emph{left passive} context $\zl$ are of the
  form $\zat{z:\pat N}$.
\item All elements of the \emph{right passive} context $\zr$ are of
  the form $\zat{z:\nat P}$.
\item All elements of the \emph{left active} context $\zla$ are
   of the form $\nat P$.
\item All elements of the \emph{right active} context $\zra$ are
  of the form $\pat N$.
\end{itemize}

\begin{notn}
  We write $\unr\zl$ or $\unr\zr$ for those contexts containing only
  unrestricted elements, \ie, each element is of the form $\zat{u:A}$
  with $u \in U$. Likewise, we write $\res\zl$ or $\res\zr$ for
  contexts containing only restricted elements.
\end{notn}

\begin{figure}[htp]
  \noindent (right focus)
  \begin{kcgather}
    \linfer[\rname{pr}]{
      \unr\zl, \zat{z:p} \vdash \foc{p} \zs \unr\zr
    }{ }
    \quad
    \linfer[\rr{\lltens}]{
      \unr\zl, \res\zl_1, \res\zl_2 \vdash \foc{P \lltens Q} \zs \unr\zr, \res\zr_1, \res\zr_2
    }{
      \unr\zl, \res\zl_1 \vdash \foc{P} \zs \unr\zr, \res\zr_1
      &
      \unr\zl, \res\zl_2 \vdash \foc{Q} \zs \unr\zr, \res\zr_2
    }
    \quad
    \linfer[\rr{\llone}]{
      \unr\zl \vdash \foc{\llone} \zs \unr\zr
    }{ }
    \\[1ex]
    \linfer[\rr{\llplus}_i]{
      \zl \vdash \foc{P_1 \llplus P_2} \zs \zr
    }{
      \zl \vdash \foc{P_i} \zs \zr
    }
    \quad
    \infer[\rr{\llbang_z}]{
      \zl \vdash \foc{\llbang_z \pat N} \zs \zr
    }{
      \zl \zs \cdot \vdash \pat N \zs \zr
      &
      \bigl(\all\zat{x:A}\in \zl, \zr. z \le x\bigr)
    }
  \end{kcgather}
  \noindent (left focus)
  \begin{kcgather}
    \linfer[\rname{nl}]{
      \unr\zl \zs \foc{n} \vdash \unr\zr, \zat{z:n}
    }{}
    \quad
    \linfer[\lr{\llwith}_i]{
      \zl \zs \foc{P_1 \llwith P_2} \vdash \zr
    }{
      \zl \zs \foc{P_i} \vdash \zr
    }
    \quad
    \infer[\lr{\llpar}]{
      \unr\zl, \res\zl_1, \res\zl_2 \zs \foc{N \llpar M} \vdash \unr\zr, \res\zr_1, \res\zr_2
    }{
      \unr\zl, \res\zl_1 \zs \foc{N} \vdash \unr\zr, \res\zr_1
      &
      \unr\zl, \res\zl_2 \zs \foc{M} \vdash \unr\zr, \res\zr_2
    }
    \\[1ex]
    \infer[\lr{\llbot}]{
      \unr\zl \zs \foc{\llbot} \vdash \unr\zr
    }{ }
    \quad
    \infer[\lr{\llimp}]{
      \unr\zl, \res\zl_1, \res\zl_2 \zs \foc{P \llimp M} \vdash \unr\zr, \res\zr_1, \res\zr_2
    }{
      \unr\zl, \res\zl_1 \vdash \foc{P} \zs \unr\zr, \res\zr_1
      &
      \unr\zl, \res\zl_2 \zs \foc{M} \vdash \unr\zr, \res\zr_2
    }
    \\[1ex]
    \infer[\lr{\llqmark_z}]{
      \zl \zs \foc{\llqmark_z \nat P} \vdash \zr
    }{
      \zl \zs \nat P \vdash \cdot \zs \zr
      &
      \bigl(\all\zat{x:A}\in \zl, \zr. z \le x\bigr)
    }
  \end{kcgather}
  \noindent (right active)
  \begin{kcgather}
    \linfer[\rname{ar}]{
      \zl \zs \zla \vdash \zra, a \zs \zr
    }{
      \zl \zs \zla \vdash \zra \zs \zr, \zat{\lin:a}
    }
    \quad
    \linfer[\rr{\llwith}]{
      \zl \zs \zla \vdash \zra, N \llwith M \zs \zr
    }{
      \zl \zs \zla \vdash \zra, N \zs \zr
      &
      \zl \zs \zla \vdash \zra, M \zs \zr
    }
    \quad
    \linfer[\rr{\lltop}]{
      \zl \zs \zla \vdash \zra, \lltop \zs \zr
    }{ }
    \\[1ex]
    \linfer[\rr{\llpar}]{
      \zl \zs \zla \vdash \zra, N \llpar M \zs \zr
    }{
      \zl \zs \zla \vdash \zra, N, M \zs \zr
    }
    \quad
    \linfer[\rr{\llbot}]{
      \zl \zs \zla \vdash \zra, \llbot \zs \zr
    }{
      \zl \zs \zla \vdash \zra \zs \zr
    }
    \quad
    \linfer[\rr{\llimp}]{
      \zl \zs \zla \vdash \zra, P \llimp N \zs \zr
    }{
      \zl \zs \zla, P \vdash \zra, N \zs \zr
    }
    \quad
    \linfer[\rr{\llqmark_z}]{
      \zl \zs \zla \vdash \zra, \llqmark_z \nat P \zs \zr
    }{
      \zl \zs \zla \vdash \zra \zs \zr, \zat{z:\nat P}
    }
  \end{kcgather}
  \noindent (left active)
  \begin{kcgather}
    \linfer[\rname{al}]{
      \zl \zs \zla, a \vdash \zra \zs \zr
    }{
      \zl, \zat{\lin:a} \zs \zla \vdash \zra \zs \zr
    }
    \quad
    \linfer[\lr{\lltens}]{
      \zl \zs \zla, P \lltens Q \vdash \zra \zs \zr
    }{
      \zl \zs \zla, P, Q \vdash \zra \zs \zr
    }
    \quad
    \linfer[\lr{\llone}]{
      \zl \zs \zla, \llone \vdash \zra \zs \zr
    }{
      \zl \zs \zla \vdash \zra \zs \zr
    }
    \\[1ex]
    \linfer[\lr{\llplus}]{
      \zl \zs \zla, P \llplus Q \vdash \zra \zs \zr
    }{
      \zl \zs \zla, P \vdash \zra \zs \zr
      &
      \zl \zs \zla, Q \vdash \zra \zs \zr
    }
    \quad
    \linfer[\lr{\llzero}]{
      \zl \zs \zla, \llzero \vdash \zra \zs \zr
    }{ }
    \quad
    \linfer[\lr{\llbang_z}]{
      \zl \zs \zla, \llbang_z \pat N \vdash \zra \zs \zr
    }{
      \zl, \zat{z:\pat N} \zs \zla \vdash \zra \zs \zr
    }
  \end{kcgather}
  \noindent (decision)
  \begin{kcgather}
    \infer[\rname{rdr}]{
      \zl \zs \cdot \vdash \cdot \zs \zr, \zat{r:P}
    }{
      \zl \vdash \foc{P} \zs \zr
    }
    \quad
    \infer[\rname{udr}]{
      \zl \zs \cdot \vdash \cdot \zs \zr, \zat{u:P}
    }{
      \zl \vdash \foc{P} \zs \zr, \zat{u:P}
    }
    \quad
    \infer[\rname{rdl}]{
      \zl, \zat{r:N} \zs \cdot \vdash \cdot \zs \zr
    }{
      \zl \zs \foc{N} \vdash \zr
    }
    \quad
    \infer[\rname{udl}]{
      \zl, \zat{u:N} \zs \cdot \vdash \cdot \zs \zr
    }{
      \zl, \zat{u:N} \zs \foc{N} \vdash \zr
    }
  \end{kcgather}
  \caption{Focused sequent calculus for classical subexponential logic}
  \label{fig:csl-rules}
\end{figure}

The rules of the calculus are presented in \figref{csl-rules}. Focused
sequent calculi presented in this style, which is a stylistic variant
of Andreoli's original formulation~\cite{andreoli92jlc}, have an
intensional reading in terms of \emph{phases}. At the boundaries of
phases are sequents of the form $\zl \zs \cdot \vdash \cdot \zs \zr$,
which are known as \emph{neutral sequents}. Proofs of neutral sequents
proceed (reading from conclusion to premises) as follows:
\begin{enumerate} \parskip0.5\baselineskip
\item \emph{Decision}: a \emph{focus} is selected from a neutral
  sequent, either from the left or the right context. This focused
  formula is moved to its corresponding focused zone using one of the
  rules \rname{rdr}, \rname{udr}, \rname{rdl} and \rname{udl}
  (\rname{u}/\rname{r} = ``unrestricted''/``restricted'', \rname{d} =
  ``decision'', and \rname{r}/\rname{l} = ``right''/``left''). These
  \emph{decision} rules copy the focused formula iff it occurs in an
  unrestricted zone.

\item \emph{Focused phase}: for a left or a right focused sequent, left or right focus rules are
  applied to the formula under focus. These focused rules are all non-invertible in the (unfocused)
  sequent calculus and therefore depend on essential choices made in the proof. In all cases except
  \rr{\llbang_z} and \lr{\llqmark_z} the focus persists to the subformulas (if any) of the focused
  formula. For binary rules, the restricted portions of the contexts are separated and distributed
  to the two premises. This much should be familiar from focusing for linear
  logic~\cite{andreoli92jlc,chaudhuri08jar}.

  The two unusual rules for subexponential logic are \rr{\llbang_z} and \lr{\llqmark_z}, which are
  generalizations of rules for the single exponential in ordinary linear logic. These rules have a
  side condition that no formulas in a strictly $\le$-smaller zone may be present in the conclusion.
  If the working zone $\lin$ is $\le$-minimal (which is not necessarily the case), then this side
  condition is trivial and the rules amount to a pure change of polarities, similar to the
  $\uparrow$ and $\downarrow$ connectives of polarised linear logic~\cite{laurent02phd}. For the
  other zones, this rule tests for the emptiness of some of the zones. It is this selective
  emptiness test that gives subexponential logic its expressive
  power~\cite{nigam09ppdp-my,nigam09phd}.

\item \emph{Active phase}: once the exponential rules \rr{\llbang_z} and \lr{\llqmark_z} are
  applied, the sequents become active and left and right active rules are applied. The order of the
  active rules is immaterial as all orderings will produce the same list of neutral sequent
  premises. In Andreoli's system the irrelevant non-determinism in the order of these rules was
  removed by treating the active contexts $\zra$ and $\zla$ as ordered contexts; however, we do not
  fix any particular ordering.
\end{enumerate}

In the traditional model of focusing, the above three steps repeat, in that order, in the entire
proof. The focused system can therefore be seen as a system of \emph{synthetic} inference rules
(sometimes known as \emph{bipoles}) for neutral sequents. It is possible to give a very general
presentation of such synthetic inference systems, for which we can prove completeness and
cut-elimination in a very general fashion~\cite{chaudhuri08lpar-my}. It is also possible, with some
non-trivial effort, to show completeness of the focused calculus without appealing to synthetic
rules~\cite{chaudhuri08jar,liang09tcs}. We do not delve into such proofs in this paper because this
ground is well trodden. Indeed, a focused completeness theorem for a very similar (but more general)
formulation of subexponential logic can be found in~\cite[chapter 6]{nigam09phd}. The synthetic
soundness and completeness theorems are as follows, proof omitted:

\begin{fac}[synthetic soundness and completeness]
  Write $\Vdash$ for the sequent arrow for an unfocused variant of the calculus of
  \figref{csl-rules}, obtained by placing the focused and active formulas in the $\lin$ zone and
  relaxing the focusing discipline.\footnote{This is basically Gentzen's LK in two-sided form for
    subexponential logic.}
  \begin{enumerate}
  \item If $\zl \zs \cdot \vdash \cdot \zs \zr$, then $\zl \Vdash \zr$ (synthetic soundness).
  \item If $\zl, \zat{\lin:\zla} \Vdash \zat{\lin:\zra}, \zr$ then $\zl \zs \zla \vdash \zra \zs
    \zr$ (synthetic completeness). \qed
  \end{enumerate}
\end{fac}

Despite its somewhat esoteric formulation, it is easy to see how subexponential logic generalizes
classical substructural logics.

\begin{fac}[familiar instances] \label{defn:csl-fam} \mbox{}
  \begin{itemize}
  \item \emph{Polarised classical multiplicative additive linear logic} (MALL) is determined by
    $\mathtt{mall} = \< \{\lin\}, \cdot, \lin, \emptyset \>$. The injections between the two
    polarised classes, sometimes known as \emph{shifts}, are as follows: $\downarrow =
    \llbang_{\lin}$ and $\uparrow = \llqmark_\lin$.
  \item \emph{Polarised classical linear logic} (CLL) is determined by $\mathtt{ll} = \< \{\lin,
    \zunr\}, \lin \le \zunr, \lin, \{\zunr\} \>$. In addition to the injections of \texttt{mall}, we
    also have the exponentials $\llbang = \llbang_\zunr$ and $\llqmark = \llqmark_\zunr$.
  \item \emph{Polarised classical logic} (CL) is given by the signature $\mathtt{l} = \< \{\lin\},
    \cdot, \lin, \{\lin\}\>$. \qed
  \end{itemize}
\end{fac}

In addition to such instances produced by instantiating the subexponential signature, it is also
possible to get the unpolarised versions of these logics by applying $\llbang_\lin$ and
$\llqmark_\lin$ to immediate negative (resp. positive) subformulas of positive (resp. negative)
formulas.

\section{Intuitionistic subexponential logic}
\label{sec:isl}

One direct way of defining intuitionistic fragments of classical logics is as follows:

\begin{defn}[intuitionistic restriction] \label{defn:int-res}
  Given a two-sided sequent calculus, its \emph{intuitionistic restriction} is that fragment where
  all inference rules are constrained to have exactly a single formula on the right hand sides of
  sequents.
\end{defn}

The practical import of this restriction is that the connectives $\llpar$ and $\llbot$ disappear,
because their right rules require two and zero conclusions, respectively. As a result, $\llimp$
becomes a primitive because its classical definition requires $\llpar$ (and De Morgan duals, which
are also missing with the intuitioistic restriction). In a slight break from
tradition~\cite{girard87tcs,schellinx91jlc,barber96tr}, we retain $\llqmark_z$ in the intuitionistic
syntax.
The intuitionistic restriction produces the following kinds of sequents:

\smallskip
\begin{tabular}{r@{\ $\vdash$\ }l@{\qquad}l}
  $\zl$ & $\foc{P}$ & right focus on $P$ \\
  $\zl \zs \foc{N}$ & $\zat{z:\nat Q}$ & left focus on $N$ \\
  $\zl \zs \zla$ & $\pat N \zs \cdot$ & active on $\zla$ and $\pat N$ \\
  $\zl \zs \zla$ & $\cdot \zs \zat{z:\nat Q}$ & active on $\zla$
\end{tabular}
\smallskip

\noindent
We shall use $\zg$ to stand for the right hand forms---either $\pat N \zs \cdot$ or $\cdot \zs
\zat{z:\nat Q}$---for active sequents above. The full collection of rules is given in
\figref{isl-rules}. As before, we use $\nat Q$ (resp. $\pat N$) to refer to a positive formula or
negative atom (resp. negative formula or positive atom).

The nature of subexponential signatures does not change in moving from classical to intuitionistic
logic. The decision rule \rname{udr} obviously cannot copy the right formula in the intuitionistic
case. Thus, both the right decision rules collapse; $\llqmark_z$ takes on an additional modal aspect
and is no longer the perfect dual of $\llbang_z$. The standard explanation of this loss of symmetry
in the exponentials is the creation of a new \emph{possibility} judgement that is weaker than linear
truth; see~\cite{chaudhuri03tr} for such a reconstruction of the intuitionistic $\llqmark$.

\begin{figure}[htp]
  \noindent (right focus)
  \begin{kcgather}
    \linfer[\rname{pr}]{
      \unr\zl, \zat{z:p} \vdash \foc{p}
    }{ }
    \quad
    \linfer[\rr{\lltens}]{
      \unr\zl, \res\zl_1, \res\zl_2 \vdash \foc{P \lltens Q}
    }{
      \unr\zl, \res\zl_1 \vdash \foc{P}
      &
      \unr\zl, \res\zl_2 \vdash \foc{Q}
    }
    \quad
    \linfer[\rr{\llone}]{
      \unr\zl \vdash \foc{\llone}
    }{ }
    \\[1ex]
    \linfer[\rr{\llplus}_i]{
      \zl \vdash \foc{P_1 \llplus P_2}
    }{
      \zl \vdash \foc{P_i}
    }
    \quad
    \infer[\rr{\llbang_z}]{
      \zl \vdash \foc{\llbang_z\pat N}
    }{
      \zl \zs \cdot \vdash \pat N \zs \cdot
      &
      \bigl(\all\zat{x:A}\in \zl. z \le x\bigr)
    }
  \end{kcgather}
  \noindent (left focus)
  \begin{kcgather}
    \linfer[\rname{nl}]{
      \unr\zl \zs \foc{n} \vdash \zat{z:n}
    }{}
    \quad
    \linfer[\lr{\llwith}_i]{
      \zl \zs \foc{P_1 \llwith P_2} \vdash \zat{z:\nat Q}
    }{
      \zl \zs \foc{P_i} \vdash \zat{z:\nat Q}
    }
    \quad
    \infer[\lr{\llimp}]{
      \unr\zl, \res\zl_1, \res\zl_2 \zs \foc{P \llimp M} \vdash \zat{z:\nat Q}
    }{
      \unr\zl, \res\zl_1 \vdash \foc{P}
      &
      \unr\zl, \res\zl_2 \zs \foc{M} \vdash \zat{z:\nat Q}
    }
    \\[1ex]
    \infer[\lr{\llqmark_z}]{
      \zl \zs \foc{\llqmark_z\nat P} \vdash \zat{y:\nat Q}
    }{
      \zl \zs \nat P \vdash \cdot \zs \zat{y:\nat Q}
      &
      \bigl(\all\zat{x:A}\in \zl, \zat{y:\nat Q}. z \le x\bigr)
    }
  \end{kcgather}
  \noindent right active
  \begin{kcgather}
    \linfer[\rname{ar}]{
      \zl \zs \zla \vdash a \zs \cdot
    }{
      \zl \zs \zla \vdash \cdot \zs \zat{\lin:a}
    }
    \quad
    \linfer[\rr{\llwith}]{
      \zl \zs \zla \vdash N \llwith M \zs \cdot
    }{
      \zl \zs \zla \vdash N \zs \cdot
      &
      \zl \zs \zla \vdash M \zs \cdot
    }
    \quad
    \linfer[\rr{\lltop}]{
      \zl \zs \zla \vdash \lltop \zs \cdot
    }{ }
    \\[1ex]
    \linfer[\rr{\llimp}]{
      \zl \zs \zla \vdash P \llimp N \zs \cdot
    }{
      \zl \zs \zla, P \vdash N \zs \cdot
    }
    \quad
    \linfer[\rr{\llqmark_z}]{
      \zl \zs \zla \vdash  \llqmark_z P \zs \cdot
    }{
      \zl \zs \zla \vdash \cdot \zs \zat{z:P}
    }
  \end{kcgather}
  \noindent (left active)
  \begin{kcgather}
    \linfer[\rname{al}]{
      \zl \zs \zla, a \vdash \zg
    }{
      \zl, \zat{\lin:a} \zs \zla \vdash \zg
    }
    \quad
    \linfer[\lr{\lltens}]{
      \zl \zs \zla, P \lltens Q \vdash \zg
    }{
      \zl \zs \zla, P, Q \vdash \zg
    }
    \quad
    \linfer[\lr{\llone}]{
      \zl \zs \zla, \llone \vdash \zg
    }{
      \zl \zs \zla \vdash \zg
    }
    \\[1ex]
    \linfer[\lr{\llplus}]{
      \zl \zs \zla, P \llplus Q \vdash \zg
    }{
      \zl \zs \zla, P \vdash \zg
      &
      \zl \zs \zla, Q \vdash \zg
    }
    \quad
    \linfer[\lr{\llzero}]{
      \zl \zs \zla, \llzero \vdash \zg
    }{ }
    \quad
    \linfer[\lr{\llbang_z}]{
      \zl \zs \zla, \llbang_z N \vdash \zg
    }{
      \zl, \zat{z:N} \zs \zla \vdash \zg
    }
  \end{kcgather}
  \noindent (decision)
  \begin{kcgather}
    \infer[\rname{dr}]{
      \zl \zs \cdot \vdash \cdot \zs \zat{z:P}
    }{
      \zl \vdash \foc{P}
    }
    \quad
    \infer[\rname{rdl}]{
      \zl, \zat{r:N} \zs \cdot \vdash \cdot \zs \zat{z:\nat Q}
    }{
      \zl \zs \foc{N} \vdash \zat{z:\nat Q}
    }
    \quad
    \infer[\rname{udl}]{
      \zl, \zat{u:N} \zs \cdot \vdash \cdot \zs \zat{z:\nat Q}
    }{
      \zl, \zat{u:N} \zs \foc{N} \vdash \zat{z:\nat Q}
    }
  \end{kcgather}
  \caption{Focused sequent calculus for intuitionstic subexponential
    logic}
  \label{fig:isl-rules}
\end{figure}

The proof of completeness for focused intuitionistic subexponential logic has never been published.
However, any similar proof for intuitionistic linear logic, such
as~\cite{chaudhuri08jar,liang09tcs}, can be adapted. Again, we simply state the synthetic version of
the theorems here without proof.

\begin{fac}[synthetic soundness and completeness]
  Write $\Vdash$ for the sequent arrow for an unfocused variant of the calculus of
  \figref{isl-rules}, obtained by placing the focused and active formulas in the $\lin$ zone and
  relaxing the focusing discipline.
  \begin{enumerate}
  \item If $\zl \zs \cdot \vdash \cdot \zs \zat{z:\nat Q}$, then $\zl \Vdash \zat{z:\nat Q}$.
  \item If $\zl, \zat{\lin:\zla} \Vdash \zat{z:\nat Q}$ then $\zl \zs \zla \vdash \cdot \zs
    \zat{z:\nat Q}$.
  \item If $\zl, \zat{\lin:\zla} \Vdash \zat{\lin:N}$ then $\zl \zs \zla \vdash N \zs \cdot$. \qed
  \end{enumerate}
\end{fac}
The intuitionstic restrictions of the familiar instances from \defnref{csl-fam} simply use the same
subexponential signatures.


\section{Focally adequate encodings}
\label{sec:encodings}

This section contains the main technical contribution of this paper: focally adequate encodings
(\defnref{focadq}) that are generic on subexponential signatures. At the level of focal adequacy,
therefore, the asymmetry in the expressive power of classical and intuitionistic logics disappears.

\subsection{Classical in intuitionistic}
\label{sec:encodings:csl-isl}

To introduce the mechanisms of encoding, we first look at the unsurprising direction: a classical
logic in its own intuitionistic restriction. The well known double negation translation, if
performed clumsily, can break even full adequacy. For example, if $N \llpar M$ is translated as
$\lnot (\llbang_\lin \lnot \llbang_\lin N \lltens \llbang_\lin \lnot \llbang_\lin M)$ where $\lnot P
\triangleq P \llimp k$ where $k$ is some fixed negative atom that is not used in classical logic. In
the rule \rr{\llpar} under this encoding, there are instances of $\llbang_\lin$ that have no
counterpart in the classical side.
Indeed, there is no derived rule in the classical focused calculus that allows one to conclude $\zl
\zs \cdot \vdash N \llpar M \zs \cdot$ from $\zl \zs \cdot \vdash \cdot \zs \llbang_\lin N,
\llbang_\lin M$, which is what would result if the active phase could be suspended arbitrarily and
the subformula property were discarded. Such a rule is certainly admissible, but admissibile rules
do not preserve bijections between proofs, and are only definable for full proofs in any case.

How does one encode classical logic in its intuitionistic restriction such that polarities are
respected? The above example suggests an obvious answer: when translating $N \llpar M$ as if it were
right-active, do not also translate the subformulas $M$ and $N$ as if they were right-active, for
they will be sent to the left. Instead, translate them as if they were
\emph{left}-active.\footnote{The astute reader might recall that this is the essence of Kuroda's
  encodings.}

\begin{defn}[encoding classical formulas] \label{defn:trans-csl-isl}\mbox{}
  \begin{itemize}
  \item The encoding \teq{-} from classical positive (resp. negative) formulas to intuitionistic
    positive (resp. negative) formulas is as follows:
    \begin{kcalign}
      \teq{p} &= p &
      \teq{\llbang_z N} &= \llbang_z\teq N &
      \teq{P \lltens Q} &= \teq{P} \lltens \teq{Q} &
      \teq{\llone} &= \llone \\
      \teq{P \llplus Q} &= \teq{P} \llplus \teq{Q} &
      \teq{\llzero} &= \llzero &
      \teq{N} &= \lnot \tne{N}
    \end{kcalign}

  \item The encoding \tne{-} from classical negative (resp. positive)
    formulas to intuitionstic positive (resp. negative) formulas is as follows:
    \begin{kcalign}
      \tne{n} &= \dmdual n &
      \tne{\llqmark_z P} &= \llbang_z\tne P &
      \tne{N \llpar N} &= \tne{N} \lltens \tne{M} &
      \tne{\llbot} &= \llone \\
      \tne{N \llwith M} &= \tne{N} \llplus \tne{M} &
      \tne{\lltop} &= \llzero &
      \tne{P \llimp N} &= \teq{P} \lltens \tne{N} &
      \tne{P} &= \lnot\teq{P}
    \end{kcalign}
    where for every negative atom $n$, there is a positive atom $\dmdual n$ in the encoding.
  \end{itemize}
\end{defn}
Contexts are translated element-wise.

\begin{defn}[encoding classical sequents] \label{defn:csl-isl} \small
  The encoding $\dblneg{-}$ of classical sequents as intuitionistic
  sequents is as follows:
  \begin{kcalign}
    \dblneg{\zl \vdash \foc{P} \zs \zr}
    &=
    \teq\zl, \tne\zr \vdash \foc{\teq P}
    \qquad
    \dblneg{\zl \zs \foc{N} \vdash \zr}
    =
    \teq\zl, \tne\zr \vdash \foc{\smash{\tne N}}
    \\
    \dblneg{\zl \zs \zla \vdash \zra \zs \zr}
    &=
    \teq\zl, \tne\zr \zs \teq\zla,\tne\zra \vdash \cdot \zs \zat{\lin:k}
  \end{kcalign}
\end{defn}

In other words, focused sequents are translated to right-focused sequents, and active sequents to
left-active sequents. The right contexts are dualised and sent to the left where the intuitionistic
restriction does not apply, while the left focus on negative formulas is turned into a right focus
because of the lack of a multiplicative left-focused rule (except $\lr{\llimp}$ which would cause an
inadvertent polarity switch).

\begin{thm} \label{thm:csl-isl}
  The encoding of \defnref{csl-isl} is focally adequate
  (\defnref{focadq}).
\end{thm}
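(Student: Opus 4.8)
The plan is to prove focal adequacy (\defnref{focadq}) by constructing, for each classical neutral sequent $S$, a bijection between the synthetic inference rules (bipoles) with conclusion $S$ in the classical calculus and those with conclusion $\dblneg{S}$ in the intuitionistic calculus, under which corresponding bipoles have the $\dblneg{-}$-images of each other's premises. Two preliminary observations organise the work. First, $\teq{-}$ and $\tne{-}$ are injective (induction on formulas, using the freshness of the reserved atom $k$ and of each $\dmdual{n}$), so $\dblneg{-}$ is injective on sequents, and its image among neutral sequents is recognisable: the right side must be $\zat{\lin:k}$ and every left-passive formula must be either a positive atom or a ``guarded'' formula of shape $G\llimp k$. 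Second, the encodings preserve every zone label, so both $z\le x$ and membership in $U$ are unaffected; hence the restricted/unrestricted context splits in the binary rules, the copying behaviour of the unrestricted decision rules, and the side conditions on $\rr{\llbang_z}$ and $\lr{\llqmark_z}$ transfer unchanged between the calculi.

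For the decision step and focused phase, I match a classical right decision on $\zat{z:P}$ (rule $\rname{rdr}$ or $\rname{udr}$, by whether $z$ is restricted) with the intuitionistic left decision ($\rname{rdl}$, resp. $\rname{udl}$) on its $\tne{-}$-image $\zat{z:\teq{P}\llimp k}$ in the left-passive context, followed by $\lr{\llimp}$: since the only left rule for $k$ is $\rname{nl}$, the right premise of $\lr{\llimp}$ is forced to close and absorbs no restricted context, so the entire restricted context flows to the left premise, a right focus on $\teq{P}$. A classical left decision on $\zat{z:N}$ is matched symmetrically, via a left decision on $\zat{z:\tne{N}\llimp k}$, the forced $\lr{\llimp}$ and $\rname{nl}$, leaving a right focus on $\tne{N}$. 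An induction on the focus formula then shows the classical focused phase proceeds in lockstep with the intuitionistic right-focused phase on the encoded formula: $\teq{-}$ is a homomorphism for the positive connectives, and $\tne{-}$ sends each negative connective homomorphically to its De~Morgan-dual positive connective ($\llwith$ to $\llplus$, $\llpar$ to $\lltens$, $\llbot$ to $\llone$, $\lltop$ to $\llzero$, $P\llimp N$ to $\teq{P}\lltens\tne{N}$, $\llqmark_z$ to $\llbang_z$), so every classical focus rule has a unique counterpart with matching choices and, by the preliminary remark, matching side conditions and splits. The atomic axioms match too: classical $\rname{pr}$ on $p$ is intuitionistic $\rname{pr}$ on $\teq{p}=p$, and classical $\rname{nl}$ on $n$ against $\zat{z:n}$ on the right is intuitionistic $\rname{pr}$ on $\tne{n}=\dmdual{n}$ against $\zat{z:\dmdual{n}}=\tne{\zat{z:n}}$ on the left.

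When the classical focused phase ends at $\rr{\llbang_z}$ or $\lr{\llqmark_z}$, both calculi enter an active phase; on the intuitionistic side the encoded operand has landed as a right-active formula of shape $G\llimp k$ (since $\teq{N}=\tne{N}\llimp k$ and $\tne{\llqmark_z\nat P}=\llbang_z\tne{\nat P}$), which I unwrap by the forced $\rr{\llimp}$ and then $\rname{ar}$-on-$k$, depositing $G$ in the left-active context, exactly where the classical sequent's dualised operand sits. It then remains to verify that the classical active phase, which drains the right-active context by sending its dualised contents leftward, matches the intuitionistic active phase on the corresponding left-active context, step for step: classical $\rr{\llpar}/\rr{\llbot}/\rr{\lltop}/\rr{\llwith}$ correspond to intuitionistic $\lr{\lltens}/\lr{\llone}/\lr{\llzero}/\lr{\llplus}$; classical $\rr{\llimp}$ (which moves $P$ left and keeps $N$ right) to intuitionistic $\lr{\lltens}$ on $\teq{P}\lltens\tne{N}$ (which moves both left); classical $\rr{\llqmark_z}$ (which files $\nat P$ into right-passive zone $z$) to intuitionistic $\lr{\llbang_z}$ on $\llbang_z\tne{\nat P}$ (which files $\tne{\nat P}$ into left-passive zone $z$); classical $\rname{ar}$ on a right-active atom to intuitionistic $\rname{al}$ on its dual; and the remaining left-active rules to the syntactically identical intuitionistic rules on $\teq{-}$-images. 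Since active-rule order is immaterial in both systems, a common interleaving exists, producing exactly matching lists of neutral premises, each a $\dblneg{-}$-image. The converse inclusion follows by running the analysis backwards: in an image neutral sequent, the only decidable left-passive formulas are either positive atoms (whose bipole is an axiom) or guarded formulas $G\llimp k$, and deciding on the latter forces exactly the $\lr{\llimp}$-then-$\rname{nl}$ unwrapping, so every intuitionistic bipole over an image sequent is the image of a unique classical bipole.

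The hard part is this active-phase matching. In the focused phase both calculi manipulate encoded copies of the same subformulas, but in the active phase the classical sequent still carries a non-trivial right side (the persistent right-passive context and the active operand), whereas the encoding has driven everything onto the left; one must therefore check that the transpose of the classical right-active decomposition, together with the forced bookkeeping around the reserved atom $k$, always terminates on a genuine $\dblneg{-}$-image and never strands a stray $k$ or a mismatched context. The short, degenerate cases deserve particular care: empty active contexts, and the case of an exponential applied directly to an atom, where there is no $G\llimp k$ guard to unwrap and the phases are too short for the bookkeeping to be routine. A secondary point, easy but worth spelling out, is that unrestricted decisions must be matched on the nose ($\rname{udr}$ on $\zat{u:P}$ with $\rname{udl}$ on $\zat{u:\teq{P}\llimp k}$) so that the formula copied on the classical side is precisely the one copied on the intuitionistic side; this holds because $U$ is $\le$-closed and preserved by the encoding.
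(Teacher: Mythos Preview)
Your proposal is correct and follows essentially the same strategy as the paper: verify, rule by rule (you organise by phase, the paper by individual rule), that each classical inference on a sequent corresponds to a uniquely determined intuitionistic inference on its $\dblneg{-}$-image with $\dblneg{-}$-image premises. The paper simply exhibits the matching derivations for a handful of representative cases (\rname{pr}, $\rr{\llbang_z}$, $\rr{\llpar}$, \rname{rdr}) and defers the rest to a technical report, leaving the converse direction and the injectivity of the encoding implicit; you spell out both explicitly, and you correctly flag the short degenerate cases (exponential applied directly to an atom) that the paper glosses over by writing only the non-atomic instance in its $\rr{\llbang_z}$ case.
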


\begin{proof}
  We will inventory the classical rules in \figref{csl-rules}, and in each case compute the
  intuitionistic synthetic derivations of the encoding of the conclusion of the classical rules.
  Here are the interesting\footnote{See \cite{chaudhuri10tr} for the remaining cases.} cases, with
  the double inference lines denoting (un)folding of defns.~\ref{defn:trans-csl-isl} and
  \ref{defn:csl-isl}, and the rule names written with the prefix \rname{c/} or \rname{i/} to
  distinguish between classical and intuitionistic respectively.
  \begin{itemize}
  \item \emph{cases of \rname{c/pr} and \crr{\llbang}}:
    \begin{kcgather}
      \infer={
        \dblneg{\unr\zl, \zat{z:p} \vdash \foc{p} \zs \unr\zr}
      }{
        \infer={
          \teq{\unr\zl}, \teq{\zat{z:p}}, \tne{\unr\zr} \vdash \foc{p}
        }{
          \infer[\rname{i/pr}]{
            \teq{\unr\zl}, \zat{z:p}, \tne{\unr\zr} \vdash \foc{p}
          }{}
        }
      }
      \qquad
      \infer={
        \dblneg{\zl \vdash \foc{\llbang_z N} \zs \zr}
      }{
        \infer={
          \teq\zl, \tne\zr \vdash \foc{\teq{\llbang_z N}}
        }{
          \infer[\irr{\llbang_z}]{
            \teq\zl, \tne\zr \vdash \foc{\llbang_z\lnot\tne{N}}
          }{
            \infer[\irr{\llimp}]{
              \teq\zl, \tne\zr \zs \cdot \vdash \lnot\tne{N} \zs \cdot
            }{
              \infer[\fbox{\irr{\llqmark_\lin}}]{
                \teq\zl, \tne\zr \zs \tne{N} \vdash k \zs \cdot
              }{
                \infer={
                  \teq\zl, \tne\zr \zs \tne{N} \vdash \cdot \zs \zat{\lin:k}
                }{
                  \dblneg{\zl \zs \cdot \vdash N \zs \zr}
                }
              }
            }
          }
        }
      }
    \end{kcgather}
    All the logical rules used are invertible. The boxed instance of \irr{\llqmark_\lin} requires
    some explanation: obviously a left active rule on \tne{N} can be applied before this rule.
    However, since they are both active rules, the choice of which to perform first is immaterial as
    they will produce the same neutral premises. If we want local---not focal---adequacy, we will
    have to impose a right-to-left ordering on the active rules. The case of \rname{c/nl} and
    \clr{\llqmark} is similar.
  \item \emph{case of \crr{\llpar}}:
    \begin{kcgather}
      \linfer={
        \dblneg{\zl \zs \zla \vdash \zra, N \llpar M \zs \zr}
      }{
        \infer={
          \teq\zl, \tne\zr \zs \teq\zla, \tne\zra, \tne{N \llpar M} \vdash \cdot \zs \zat{\lin:k}
        }{
          \infer[\ilr{\lltens}]{
            \teq\zl, \tne\zr \zs \teq\zla, \tne\zra, \tne{N} \lltens \tne{M} \vdash \cdot \zs \zat{\lin:k}
          }{
            \infer={
              \teq\zl, \tne\zr \zs \teq\zla, \tne\zra, \tne{N}, \tne{M} \vdash \cdot \zs \zat{\lin:k}
            }{
              \dblneg{\zl \zs \zla \vdash \zra, N, M \zs \zr}
            }
          }
        }
      }
    \end{kcgather}
    The cases of \crr{\llbot}, \clr{\llbang_z} and \crr{\llqmark_z}
    are similar.

  \item \emph{case of \rname{c/rdr}}:
    \begin{kcgather}
      \infer={
        \dblneg{{\unr\zl_1, \res\zl_2} \zs \cdot \vdash \cdot \zs {\unr\zr_1, \res\zr_2}, \zat{r:P}}
      }{
        \infer={
          \teq{\unr\zl_1, \res\zl_2}, \tne{{\unr\zr_1, \res\zr_2}}, \tne{\zat{r:P}} \zs \cdot \vdash \cdot \zs \zat{\lin:k}
        }{
          \infer[\rname{i/rdl}]{
            \teq{\unr\zl_1, \res\zl_2}, \tne{{\unr\zr_1, \res\zr_2}}, \zat{r:\lnot\teq P} \zs \cdot \vdash \cdot \zs \zat{\lin:k}
          }{
            \infer[\ilr{\llimp}]{
              \teq{\unr\zl_1, \res\zl_2}, \tne{{\unr\zr_1, \res\zr_2}} \zs \foc{\lnot\teq P} \vdash \zat{\lin:k}
            }{
              \infer{
                \teq{\unr\zl_1, \res\zl_2}, \tne{{\unr\zr_1, \res\zr_2}} \vdash \foc{\teq P}
              }{
                \dblneg{{\unr\zl_1, \res\zl_2} \vdash \foc{P} \zs {\unr\zr_1, \res\zr_2}}
              }
              &
              \infer[\rname{i/nl}]{
                \teq{\unr\zl_1}, \tne{\unr\zr_1} \zs \foc{k} \vdash \zat{\lin:k}
              }{ }
            }
          }
        }
      }
    \end{kcgather}
    Note that the right premise is forced to terminate in the same phase. This would not be possible
    if, instead of $k$, we were to use some other negative formula such as $\llqmark_\lin \llzero$.
    In the presence of some unrestricted subexponential $u$, we might have used $\llqmark_u \llzero$
    instead (note that, classically, $\llqmark_u \llzero \equiv \llbot$). \qed
  \end{itemize}
\end{proof}


\begin{cor} \mbox{}
  \begin{itemize}
  \item There is a focally adequate encoding of classical MALL in
    intuitionistic MALL.
  \item There is a focally adequate encoding of CLL in ILL.
  \item There is a focally adequate encoding of CL in IL.
  \end{itemize}
\end{cor}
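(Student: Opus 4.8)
The plan is to read the corollary off Theorem~\ref{thm:csl-isl} by instantiating the subexponential signature in three ways. By \defnref{csl-fam}, (polarised) classical MALL, CLL, and CL are precisely classical subexponential logic over the signatures $\mathtt{mall}$, $\mathtt{ll}$, and $\mathtt{l}$, and---by the remark closing \secref{isl}---their intuitionistic restrictions, namely (polarised) intuitionistic MALL, ILL, and IL, are intuitionistic subexponential logic over those very same signatures. Since both the encoding of \defnref{csl-isl} and the focal adequacy argument of Theorem~\ref{thm:csl-isl} are uniform in the signature, taking $\Sigma$ to be $\mathtt{mall}$, then $\mathtt{ll}$, then $\mathtt{l}$ yields the three required focally adequate encodings with no further work.

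Two routine points make this precise. First, the encoding never steps outside the connectives available in the intuitionistic fragment: \defnref{trans-csl-isl} maps atoms to atoms and otherwise uses only $\lltens$, $\llone$, $\llplus$, $\llzero$, $\llbang_z$, $\llimp$, $\llqmark_z$ and the derived negation $\lnot A = A \llimp k$, all of which are present in both syntaxes---$\llimp$ is primitive on both sides and $\llqmark_\lin$ (the shift $\uparrow$) is retained intuitionistically by the convention of \secref{isl}, so the boxed $\irr{\llqmark_\lin}$ step and the $\irr{\llimp}$ and $\ilr{\llimp}$ steps in the proof of Theorem~\ref{thm:csl-isl} are all legal at each of the three signatures. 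Moreover no zone beyond those already present in $\mathtt{mall}$, $\mathtt{ll}$, $\mathtt{l}$ is ever introduced, because the encoding re-uses exactly the subexponential indices of the source formula together with $\lin$. Second, \defnref{csl-isl} uses a fresh negative atom $k$ together with a fresh positive atom $\dmdual n$ for each negative atom $n$; as the atom supply is conventionally unbounded, adding these does not take the target outside intuitionistic MALL, ILL, or IL.

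For the \emph{unpolarised} versions of these logics one first realises the unpolarised calculi as fragments of the polarised ones via the $\llbang_\lin/\llqmark_\lin$-insertion recipe noted at the end of \secref{csl}---which is itself a bijection on synthetic rules---and then composes with the encoding above. There is no real obstacle here: all of the mathematical content lives in Theorem~\ref{thm:csl-isl}, and the corollary only asks one to exhibit the relevant signatures and to confirm that the generic construction specialises without escaping the available connectives, zones, or atoms. The single thing genuinely worth double-checking is that $\llqmark_\lin$ and the fresh atom $k$ remain available in the target, and both are guaranteed by the syntactic conventions fixed in \secref{isl}.
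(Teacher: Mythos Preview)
Your proposal is correct and follows the same approach as the paper: instantiate Theorem~\ref{thm:csl-isl} at the three signatures \texttt{mall}, \texttt{ll}, and \texttt{l} from \defnref{csl-fam}. The paper's proof is a single sentence to this effect; your additional checks that the encoding stays within the intuitionistic connective set, introduces no new zones, and only requires fresh atoms are accurate elaborations of why the instantiation goes through, though the paper leaves these implicit.
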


\begin{proof}
  Instantiate \thmref{csl-isl} on the subexponential signatures from
  \defnref{csl-fam}. \qed
\end{proof}

\noindent
These instances are all apparently novel, partly because focal adequacy of classical logics in their
own intuitionistic restrictions has not been deeply investigated. In the work on
LJF~\cite{liang09tcs} there is a focally adequate encoding of classical logic in intuitionistic
linear logic, which can be seen as a combination of the second and third of the above instances.

\subsection{Intuitionistic in classical}
\label{sec:encodings:isl-jsl}

The previous subsection showed that the intuitionistic restriction of a classical logic can
adequately encode the classical logic itself. This is not the case in the other direction without
further modifications to the subexponential signature. It is easy to see this: consider just the
MALL fragment and the problem of encoding the \ilr{\llimp} rule. If $\llimp$ is encoded as itself,
then in the classical side we have the following derived rule (all the zones are $\lin$, and
elided):
\begin{kcgather}
  \infer{
    \zl \zs \foc{P \llimp N} \vdash \nat Q
  }{
    \zl \vdash \foc{P} \zs \nat Q
    &
    \zl \zs \foc{N} \vdash \cdot
  }
\end{kcgather}
This rule has no intuitionistic counterpart. Therefore, the encoding of $\llimp$ must prevent the
right formula $\nat Q$ from being sent to the left branch, \ie, to test that the rest of the right
context in a right focus is empty. MALL itself cannot perform this test because it lacks any truly
modal operators. Exactly the same problem exists for the encoding of IL in CL, which also lacks any
true modal operators.

Quite obviously, the encoding of $\llimp$ requires some means of testing the emptiness of contexts.
CLL (\defnref{csl-fam}) has an additional zone $\zunr$ that is greater than $\lin$, and therefore
$\llbang_\zunr$ can test for the absence of any $\lin$-formulas. It turns out that this is enough to
get a focally adequate encoding of IL as follows: the sole zone $\lin$ of IL is split into two,
$\lin_r$ (restricted) and $\lin_u$ (unrestricted), and the right hand side of IL sequents is encoded
with $\lin_r$.
Then, whenever $P$ is of the form $\llbang_\lin N$, the translation of it on the right is of the
form $\llbang_{\lin_u} M$.
In the rest of this subsection, we will systematically extend this observation to an arbitrary
subexponential signature.

\def\lj{\mathtt{l}}
\def\rj{\mathtt{r}}
\def\hatle{\mathbin{\hat\le}}
\def\nhatle{\mathbin{\hat\nleq}}

\begin{defn}[signature splitting] \label{defn:sig-split}
  Let a subexponential signature $\Sigma = \<Z, \le, \lin, U\>$ be given. Write:
  \begin{itemize}
  \item $\hat Z$ for the zone set $(Z \times \{\lj\}) \cup (Z \times \{\rj\})$, where $\lj$ and
    $\rj$ are distinct labels for the left and the right of the sequents, respectively, and $\times$
    is the Cartesian product. $Z \times \{\lj\}$ will be called the \emph{left form} of $\hat Z$,
    and $Z \times \{\rj\}$ will be called its \emph{right form}.
  \item $\hat U$ for the unrestricted zone set $U \times \{\lj\}$.
  \item $\hat \lin$ for the working zone $(\lin, \lj)$.
  \item $\hatle$ for the smallest relation on $\hat Z \times \hat Z$
    for which:
    \begin{itemize}
    \item[--] $(x, \lj) \hatle (y, \lj)$ if $x \le y$;
    \item[--] $(x, \rj) \hatle (y, \rj)$ if $x \le y$; and
    \item[--] $(x, \rj) \hatle (x, \lj)$ and $(x, \lj) \nhatle (x, \rj)$.
    \end{itemize}
  \end{itemize}
  The subexponential signature $\hat\Sigma = \<\hat Z, \hatle,
  \hat\lin, \hat U\>$ will be called the \emph{split form} of
  $\Sigma$.
\end{defn}

We intend to treat the right form specially. The zones in the right form are restricted, which
encodes the linearity of the right hand side inherent in the intuitionistic restriction
(\defnref{int-res}).
Our encoding will guarantee that the right hand sides of sequents in the encoding contain no zones
in the left form. Thus, when $\llbang_{(z, \lj)} N$ is under right focus, the side condition on the
$\rr{\llbang}$ rule will ensure that there are no other formulas on the right hand side, because the
right forms are made pointwise smaller than their left forms. Dually, on the left we shall use
$\llqmark_{(z, \rj)}$ to encode $\llqmark_z$; since the right form zones are pointwise smaller than
the left form zones, but retain the pre-split ordering inside their own zone, the side conditions
enforce the same occurrences as in the source calculus.

\def\tlp#1{\left(#1\right)^{\mathtt{lp}}}
\def\trp#1{\left(#1\right)^{\mathtt{rp}}}
\def\tlf#1{\left(#1\right)^{\mathtt{lf}}}
\def\trf#1{\left(#1\right)^{\mathtt{rf}}}
\def\tla#1{\left(#1\right)^{\mathtt{la}}}
\def\tra#1{\left(#1\right)^{\mathtt{ra}}}

\begin{defn}[encoding intuitionistic contexts]
  \label{defn:trans-isl-csl} \mbox{}
  \begin{itemize}
  \item The left-passive context $\zl$ is encoded pointwise using the
    translation $\tlp{-}$:
    \begin{kcalign}
      \tlp{\zat{z:\pat N}} &= \zat{(z, \lj):\tlp{\pat N}} &
      \tlp{p} &= p &
      \tlp{N} &= \tlf{N}
    \end{kcalign}
  \item A left-focused formula $N$ is encoded using the translation
    $\tlf{-}$:
    \begin{kcalign}
      \tlf{n} &= n \qquad
      \tlf{\llqmark_z \nat P} = \llqmark_{(z, \rj)} \tla{\nat P} \quad
      \tlf{N \llwith M} = \tlf{N} \llwith \tlf{M} \quad
      \tlf{\lltop} = \lltop \\
      \tlf{P \llimp N} &= \trf{P} \llimp \tlf{N}
    \end{kcalign}
  \item A right-focused formula $P$ is encoded using the translation
    $\trf{-}$:
    \begin{kcalign}
      \trf{p} &= p \qquad
      \trf{\llbang_z\pat N} = \llbang_{(z,\lj)} \tra{\pat N} \quad
      \trf{P \lltens Q} = \trf{P} \lltens \trf{Q} \quad
      \trf{\llone} = \llone \\
      \trf{P \llplus Q} &= \trf{P} \llplus \trf{Q} \quad
      \trf{\llzero} = \llzero
    \end{kcalign}
  \item A left-active context $\zla$ is encoded pointwise using the
    translation $\tla{-}$:
    \begin{kcalign}
      \tla{a} &= \llbang_{(\lin, \lj)} a \qquad
      \tla{\llbang_z\pat N} = \llbang_{(z, \lj)} \tlp{\pat N} \quad
      \tla{P \lltens Q} = \tla{P} \lltens \tla{Q} \quad
      \tla{\llone} = \llone \\
      \tla{P \llplus Q} &= \tla{P} \llplus \tla{Q} \quad
      \tla{\llzero} = \llzero
    \end{kcalign}
  \item A right-active formula $\pat N$ is encoded using the
    translation $\tra{-}$:
    \begin{kcalign}
      \tra{a} &= \llbang_{(\lin, \rj)} a \qquad
      \tra{\llqmark_z\nat P} = \llqmark_{(z, \rj)} \trp{\nat P} \quad
      \tra{N \llwith M} = \tra{N} \llwith \tra{M} \quad
      \tra{\lltop} = \lltop \\
      \tra{P \llimp N} &= \tla{P} \llimp \tra{N}
    \end{kcalign}
  \item A right-passive zoned formula $\zat{z:\nat P}$ is encoded using the
    translation $\trp{-}$:
    \begin{kcalign}
      \trp{\zat{z:\nat P}} &= \zat{(z,\rj):\trp{\nat P}} &
      \trp{n} &= n &
      \trp{P} &= \trf{P}
    \end{kcalign}
  \end{itemize}
\end{defn}
The cases for $\trf{\llbang_z\pat N}$ and $\tlf{\llqmark_z\nat P}$
will be crucial for the proof of \thmref{isl-csl}. Most of the
remaining cases can be seen as an abstract interpretation of the
focused rules (\figref{isl-rules}) on the various contexts. The
definition of the encoding of intuitionistic sequents is now
completely systematic.

\begin{defn}[encoding intuitionistic sequents] \small
  \label{defn:isl-csl}
  The encoding $\eic{-}$ of intuitionistic sequents as classical
  sequents is as follows:
  \begin{kcalign}
    \eic{\zl \vdash \foc{P}}
    &=
    \tlp{\zl} \vdash \foc{\trf{P}} \zs \cdot &
    \eic{\zl \zs \foc{N} \vdash \zat{z:\nat Q}}
    &=
    \tlp{\zl} \zs \foc{\tlf{N}} \vdash \trp{\zat{z:\nat Q}} \\
    \eic{\zl \zs \zla \vdash \pat N \zs \cdot}
    &=
    \tlp{\zl} \zs \tla{\zra} \vdash \tra{\pat N} \zs \cdot &
    \eic{\zl \zs \zla \vdash \cdot \zs \zat{z:\nat Q}}
    &=
    \tlp{\zl} \zs \tla{\zra} \vdash \cdot \zs \trp{\zat{z:\nat Q}}
  \end{kcalign}
\end{defn}
Observe that the right hand sides of the encoding have the
intuitionistic restriction (\defnref{int-res}). This restriction will
be enforced at every transtion from a focused to an active phase,
which is enough because the active rules cannot increase the size of
the right contexts.

\begin{thm} \label{thm:isl-csl}
  The encoding of \defnref{isl-csl} is focally adequate
  (\defnref{focadq}).
\end{thm}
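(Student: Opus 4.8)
The plan is to mirror the proof of \thmref{csl-isl}: I inventory the intuitionistic focused rules of \figref{isl-rules}, and for each one I compute the classical synthetic derivation over the split signature $\hat\Sigma$ of the $\eic{-}$-encoding (\defnref{isl-csl}) of its conclusion, checking that this derivation is exactly one synthetic step whose premises are the encodings of the intuitionistic premises. Conversely --- and this is what upgrades the case inventory to a bijection, hence to focal adequacy (\defnref{focadq}) --- I must show that \emph{every} classical synthetic step issuing from an encoded neutral sequent arises in this way. Throughout I maintain the invariant that, in an encoded derivation, every left context mentions only zones of the left form $Z \times \{\lj\}$ and every right context mentions only zones of the right form $Z \times \{\rj\}$; in particular each right-passive context holds a single zoned formula, which is exactly the intuitionistic restriction (\defnref{int-res}). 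All the rules invoked below preserve this invariant.

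The bulk of the cases are mechanical. The active rules, and the focus rules acting on purely positive or purely negative non-exponential formulas (the surviving multiplicative $\lltens$, the additives $\llplus,\llwith$, the units $\llone,\llzero,\lltop$), transfer essentially verbatim, because the six translations $\tlp{-},\trp{-},\tlf{-},\trf{-},\tla{-},\tra{-}$ commute with those connectives and the restricted/unrestricted partition of contexts is inherited from $\hat U = U \times \{\lj\}$. The atom clauses $\tla{a}$ and $\tra{a}$, which prefix $a$ with a subexponential in a working zone, exist only to keep atoms in zones of the correct form: a bare atom moved by a classical \rname{al}/\rname{ar} would land in the working zone $\hat\lin = (\lin,\lj)$, which is illegal on the right, so on the encoded side the wrapper is discharged first (on the right this routes the atom into a right-form zone) before the matching \rname{al}/\rname{ar}; on the source side the whole step is just \rname{al}/\rname{ar}. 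The decision rules match directly: every right-form zone is restricted, so the sole intuitionistic right decision \rname{dr} always corresponds to classical \rname{rdr} (never \rname{udr}), while \rname{rdl}/\rname{udl} correspond to their classical namesakes according to whether the decided left-form zone lies in $\hat U$.

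The exponential rules exercise the design of $\hatle$. Intuitionistic $\rr{\llbang_z}$ is encoded by $\rr{\llbang_{(z,\lj)}}$: since $\hatle$ agrees with $\le$ inside the left form, and the left context of an encoded right-focus sequent mentions only left-form zones while its right context is empty, the classical side condition $\all\zat{x:A}.\,(z,\lj)\hatle x$ is literally the source condition $\all\zat{x:A}.\,z\le x$. Dually $\lr{\llqmark_z}$ is encoded by $\lr{\llqmark_{(z,\rj)}}$; now the side condition ranges over the left-form part of the context \emph{and} over the single right-form formula $\zat{(y,\rj):\trp{\nat Q}}$, and because $(z,\rj)\hatle(x,\lj)$ iff $z\le x$ and $(z,\rj)\hatle(y,\rj)$ iff $z\le y$, the classical condition decomposes exactly into the two conjuncts of the source condition $\all\zat{x:A}\in\zl,\zat{y:\nat Q}.\,z\le x$. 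It is used here both that the right form sits pointwise below the left form and that the original order is retained within the right form.

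The crux is $\lr{\llimp}$, and with it the reverse direction. Intuitionistic $\lr{\llimp}$ keeps the goal $\zat{z:\nat Q}$ in its left-focus premise; classically, left-focusing on $\trf{P} \llimp \tlf{M}$ uses the $\lr{\llimp}$ rule of \figref{csl-rules}, which distributes the restricted right context over the two premises, so \emph{a priori} the encoded goal could be sent to either. I will show it is forced into the left-focus premise. No right-focus phase (the shape of the other premise) can be completed while a right-form formula is present on the right: every way of closing a focused branch there --- \rname{pr}, $\rr{\llone}$, or $\rr{\llbang_{(z_0,\lj)}}$ --- needs the right context free of that formula (the first two because right-form zones are restricted, not unrestricted; the last because $(z_0,\lj)\hatle(z,\rj)$ is false, no left-form zone being $\hatle$ any right-form zone), while the remaining right-focus rules merely carry the formula along. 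Hence the goal must go to the left-focus premise, matching the source. The same ``no deviation'' analysis, run over the whole rule set, yields the reverse direction: from an encoded neutral sequent no classical decision/focus/active combination can stray from the shape prescribed by the intuitionistic rules. As in \thmref{csl-isl}, the leftover non-determinism in the order of active rules --- e.g.\ whether a left-active rule or an atom-wrapper's exponential fires first --- is immaterial at the focal level, and fixing a right-to-left order would recover local adequacy. I expect this forcing argument to be the main obstacle: making it watertight needs the form invariant above, a check that the blocking meshes with the multiplicative splitting of restricted contexts, and a verification that each completed encoded active phase yields precisely the encodings of the source rule's neutral premises. The remaining cases, spelled out in~\cite{chaudhuri10tr}, are a routine if tedious traversal.
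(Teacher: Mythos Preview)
Your proposal is correct and follows essentially the same approach as the paper's own proof: inventory the intuitionistic rules, encode each conclusion, and verify that the unique classical synthetic step reproduces the encoded premises, with the $\lr{\llimp}$ case as the crux where the right-form/left-form incomparability under $\hatle$ forces the goal into the left-focus branch. Your explicit statement of the left/right-form zone invariant and of the reverse direction of the bijection are useful elaborations that the paper leaves largely implicit (deferring details to \cite{chaudhuri10tr}), but the argument is the same.
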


\begin{proof}
  As before for \thmref{csl-isl}, we shall prove this by inventorying
  the intuitionistic rules of \figref{isl-rules}, encode the
  conclusions of each of these rules, and observe whether the neutral
  premises of the derived inference rules are in bijection with those
  of the \figref{isl-rules}. All but the following important cases are
  omitted here for space reasons.\footnote{See \cite{chaudhuri10tr}.}
  \begin{itemize}
  \item \emph{cases of \rname{i/pr} and \irr{\llbang_z}}:
    \begin{kcgather}
      \infer={
        \eic{\unr\zl, \zat{z:p} \vdash \foc{p}}
      }{
        \infer={
          \tlp{\unr\zl}, \tlp{\zat{z:p}} \vdash \foc{\trf{p}}
        }{
          \infer[\rname{c/pr}]{
            \tlp{\unr\zl}, \zat{z:p} \vdash \foc{p}
          }{ }
        }
      }
      \qquad
      \infer={
        \eic{\zl \vdash \foc{\llbang_z\pat N}}
      }{
        \infer={
          \tlp\zl \vdash \foc{\trf{\llbang_z\pat N}} \zs \cdot
        }{
          \infer[\fbox{\crr{\llbang}}]{
            \tlp\zl \vdash \foc{\llbang_{(z,\lj)}\tla{\pat N}} \zs \cdot
          }{
            \infer={
              \tlp\zl \zs \cdot \vdash \tla{\pat N} \zs \cdot
            }{
              \eic{\zl \zs \cdot \vdash \pat N \zs \cdot}
            }
          }
        }
      }
    \end{kcgather}
    The boxed instance of \crr{\llbang} is valid because all the zoned
    formulas in $\tlp\zl$ are in the left form zones, as is the zone
    of the $\llbang$ itself, so the comparison $\hatle$ is the same as
    $\le$ on the intuitionistic zones (\defnref{sig-split}).

  \item \emph{case of \ilr{\llimp}}:
    \begin{kcgather}
      \infer={
        \eic{\unr\zl, \res\zl_1, \res\zl_2 \zs \foc{P \llimp N} \vdash \zat{z:\nat Q}}
      }{
        \infer={
          \tlp{\unr\zl, \res\zl_1, \res\zl_2} \zs \foc{\tlf{P \llimp N}} \vdash \trp{\zat{z:\nat Q}}
        }{
          \infer[\fbox{\clr{\llimp}}]{
            \tlp{\unr\zl, \res\zl_1, \res\zl_2} \zs \foc{\trf{P} \llimp \tlf{N}} \vdash \trp{\zat{z:\nat Q}}
          }{
            \infer={
              \tlp{\unr\zl, \res\zl_1} \vdash \foc{\trf{P}} \zs \cdot
            }{
              \eic{\unr\zl, \res\zl_1 \vdash \foc{P}}
            }
            &
            \infer={
              \tlp{\unr\zl, \res\zl_2} \zs \foc{\tlf{N}} \vdash \trp{\zat{z:\nat Q}}
            }{
              \eic{\unr\zl, \res\zl_2 \zs \foc{N} \vdash \zat{z:\nat Q}}
            }
          }
        }
      }
    \end{kcgather}
    The boxed instance of $\clr{\llimp}$ contains the only split of
    the right context that can succeed in the same focused phase, \ie,
    reach an initial sequent or a phase transition, becaue that
    $\trf{P}$ eventually produces either a positive atom (which must
    finish the proof with $\rname{c/pr}$ and since right form zones
    are restricted $\trp{\zat{z:\nat Q}}$ cannot be present) or a
    $\llbang_{(z,\lj)}$ which guarantees that the rest of the right
    context is empty.

  \item \emph{cases of \ilr{\llqmark_z} and \rname{dr}}:
    \begin{kcgather}
      \infer={
        \eic{\zl \zs \foc{\llqmark_z\nat P} \vdash \zat{y:\nat Q}}
      }{
        \infer={
          \tlp{\zl} \zs \foc{\tlf{\llqmark_z\nat P}} \vdash \trp{\zat{y:\nat Q}}
        }{
          \infer[\fbox{\clr{\llqmark}}]{
            \tlp{\zl} \zs \foc{\llqmark_{(z,\rj)}\tla{\nat P}} \vdash \trp{\zat{y:\nat Q}}
          }{
            \infer={
              \tlp{\zl} \zs \tla{\nat P} \vdash \cdot \zs \trp{\zat{y:\nat Q}}
            }{
              \eic{\zl \zs \nat P \vdash \cdot \zs \zat{y:\nat Q}}
            }
          }
        }
      }
      \qquad
      \infer={
        \eic{\zl \zs \cdot \vdash \cdot \zs \zat{z:P}}
      }{
        \infer={
          \tlp{\zl} \zs \cdot \vdash \cdot \zs \trp{\zat{z:P}}
        }{
          \infer[\fbox{\rname{c/rdr}}]{
            \tlp{\zl} \zs \cdot \vdash \cdot \zs \zat{(z,\rj):\trf{P}}
          }{
            \infer={
              \tlp{\zl} \vdash \foc{\trf{P}} \zs \cdot
            }{
              \zl \vdash \foc{P}
            }
          }
        }
      }
    \end{kcgather}
    The boxed instance of \clr{\llqmark} is justified because the subscript zone $(z,\rj)$ is of the
    right form (in order to compare with $(y,\rj)$) which is $\hatle$-smaller than its corresponding
    left-form zone (\defnref{sig-split}). Note that it is crucial for soundness to have $(z, \rj)$
    not be smaller than all left form zones. Since right form zones are restricted, there is no
    copying in the boxed instance of \rname{c/rdr}. The other decision cases are similar. \qed
  \end{itemize}
\end{proof}

\noindent
We note one important direct corollary of \thmref{isl-csl}.

\begin{cor}[intuitionistic logic in classical linear logic] \label{thm:il-cll}
  There is a focally adequate encoding of intuitiontistic logic in
  classical linear logic.
\end{cor}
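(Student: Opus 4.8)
The proof is a one-line corollary of Theorem~\ref{thm:isl-csl}, entirely parallel to the corollary already drawn from Theorem~\ref{thm:csl-isl}: instantiate the generic encoding at a concrete source signature and then recognise the split target signature. By Fact~\ref{defn:csl-fam} together with the remark closing \secref{isl}, polarised intuitionistic logic is precisely the intuitionistic subexponential logic over $\mathtt{l} = \<\{\lin\},\cdot,\lin,\{\lin\}\>$ --- the intuitionistic restriction (\defnref{int-res}) of polarised classical logic. Hence Theorem~\ref{thm:isl-csl} instantiated at $\Sigma = \mathtt{l}$ already delivers a focally adequate encoding of it into the classical subexponential logic over the split signature $\widehat{\mathtt{l}}$, and the only thing left to check is that this target logic is CLL.

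Unfolding \defnref{sig-split}, $\widehat{\mathtt{l}}$ has zones $\{(\lin,\lj),(\lin,\rj)\}$, unrestricted zones $\{(\lin,\lj)\}$, working zone $(\lin,\lj)$, and $(\lin,\rj)\hatle(\lin,\lj)$ as its sole strict comparison. Applying the bijective renaming $(\lin,\rj)\mapsto\lin$ and $(\lin,\lj)\mapsto\zunr$, the preorder becomes $\lin\le\zunr$, the ($\le$-closed) unrestricted set becomes $\{\zunr\}$, the connectives $\llbang_{(\lin,\lj)},\llqmark_{(\lin,\lj)}$ become the CLL exponentials $\llbang_\zunr,\llqmark_\zunr$ --- the $\hatle$-maximality of $(\lin,\lj)$ turning precisely into the side condition ``no linear formula present'' --- and $\llbang_{(\lin,\rj)},\llqmark_{(\lin,\rj)}$ become the shifts $\downarrow,\uparrow$, whose side conditions are now vacuous. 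Modulo this renaming, the formulas and all the logical and exponential rules of $\widehat{\mathtt{l}}$ coincide with those of CLL as presented by $\mathtt{ll}$ in Fact~\ref{defn:csl-fam}; this is just Fact~\ref{defn:csl-fam} read backwards.

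The one point that still demands care --- and the step I expect to be the real obstacle --- is that $\widehat{\mathtt{l}}$ nominates the \emph{unrestricted} zone as its working zone, whereas $\mathtt{ll}$ nominates the linear one. The working zone enters \figref{csl-rules} only through the atom rules $\rname{ar},\rname{al}$ and through the relaxed arrow $\Vdash$, never through a logical or exponential rule, nor --- on encoded sequents, whose right conclusion always lands in the restricted right-form zone $(\lin,\rj)$ by \defnref{isl-csl} --- through a decision rule. On the sequents in the image of the encoding, every hypothesis atom sits in a left-form zone and every active conclusion atom is guarded by a $\downarrow = \llbang_{(\lin,\rj)}$, so the atom rules fire identically up to the evident relabelling of zone tags. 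Consequently a focused $\widehat{\mathtt{l}}$-derivation of an encoded sequent is, up to that relabelling, a focused $\mathtt{ll}$-derivation of it and conversely; composing Theorem~\ref{thm:isl-csl} with this identification yields the focally adequate encoding of intuitionistic logic in classical linear logic. I expect the verification that the two presentations share the same synthetic rules on the image to be routine once the signatures are laid side by side; everything else is immediate from \defnref{sig-split}.
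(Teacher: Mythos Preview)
Your proof follows the paper's approach exactly: instantiate Theorem~\ref{thm:isl-csl} at $\Sigma = \mathtt{l}$ and identify the split signature $\widehat{\mathtt{l}}$ with $\mathtt{ll}$. The paper's own proof is a single line declaring the two signatures ``isomorphic''; you have actually been more scrupulous than the paper in flagging and resolving the working-zone discrepancy that this word glosses over.
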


It is well known~\cite{girard87tcs} that (classical) linear logic can encode the intuitionistic
implication $\limp$ as follows: $A \limp B \triangleq \llbang A \llimp B$. However, this encoding is
only globally adequate~\cite{schellinx91jlc}. It is possible to refine this encoding to obtain a
fully adequate encoding~\cite{liang09lics-my} in an enriched classical linear logic which is not
apparently an instance of classical subexponential logic. Corollary~\ref{thm:il-cll} further
improves our undertanding of encodings of intuitionistic implicication by permuting $\llbang$ into
the antecedent of the implication until there is a phase change, which removes the bureaucratic
polarity switch inherent in this implication.\footnote{Note that the polarised intuitionistic
  implication $P \llimp N$, if encoded using Girard's encoding, would be $\llbang{{\uparrow}P}
  \llimp N$, which breaks the polarisation of the antecedent.}

\begin{proof}[of \corref{il-cll}]
  The split of the signature \texttt{l} (\defnref{csl-fam}) is
  isomorphic to the signature \texttt{ll}, so apply \thmref{isl-csl}.
  \qed
\end{proof}

\section{Conclusions}
\label{sec:concl}

Section \ref{sec:encodings} shows that any given classical (resp. intuitionistic) subexponential
logic can be encoded in a related intuitionistic (resp. classical) subexponential logic such that
partial synthetic derivations are preserved. This is a technical result, with at least one of the
directions of encoding being novel. It strongly suggests that one of the fractures in logic
identified by Miller in~\cite{miller10bcs-my}---the classical/intuitionistic divide---might be
healed by analyses and algorithms that are generic on subexponential signatures. One might still
favour ``classical'' or ``intuitionistic'' dialects for proofs, but neither format is more
fundamental.

The results of this paper have two caveats. First, we only consider the ``restricted'' or the
``unrestricted'' flavours of subexponentials; in~\cite{danos93kgc-my} there were also
subexponentials of the ``strict'' and ``affine'' flavours for which our results here do not extend
directly. Second, we do not consider encodings involving non-propositional kinds, such as terms or
frames. Subexponentials are still useful for such stronger encodings, but \emph{representational
  adequacy} may not be as straightforward.

\bibliographystyle{abbrv}
\bibliography{clasint,master}

\end{document}